\definecolor{nicegreen}{RGB}{0,194,0}
\definecolor{nicegrey}{RGB}{153,153,153}
\DeclareFontFamily{U}{matha}{\hyphenchar\font45}
\DeclareFontShape{U}{matha}{m}{n}{
<5> <6> <7> <8> <9> <10> gen * matha
<10.95> matha10 <12> <14.4> <17.28> <20.74> <24.88> matha12
}{}
\DeclareSymbolFont{matha}{U}{matha}{m}{n}
\DeclareMathSymbol{\Lt}{3}{matha}{"CE}
\DeclareMathSymbol{\Gt}{3}{matha}{"CF}
\makeatletter \hypersetup{pdftitle={\@title}}}
\newtheorem{theorem}{Theorem}[]
\newtheorem*{theorem-no-number}{Theorem}
\newtheorem{definition}[theorem]{Definition}
\newtheorem{example}[theorem]{Example}
\newtheorem{lemma}[theorem]{Lemma}
\newtheorem{corollary}[theorem]{Corollary}
\newtheorem{remark}[theorem]{Remark}
\newtheorem{claim}[theorem]{Claim}
\definecolor{nicegreen}{RGB}{0,194,0}
\definecolor{nicegrey}{RGB}{153,153,153}
\title{Hive is PSPACE-Hard} 
\author{%
  Dani\"el I. Andel
\and
  Benjamin G. Rin%
    \thanks{Corresponding author. Utrecht University, Utrecht, The Netherlands.
      \protect\url{b.g.rin@uu.nl}}
    }
\date{June 01, 2025}
\begin{document}

\maketitle

\begin{abstract}
Hive is an abstract strategy game played on a table with hexagonal pieces. First published in 2001, it was and continues to be highly popular among both casual and competitive players. In this paper, we show that for a suitably generalized version of the game, the computational problem of determining whether a given player in an arbitrary position has a winning strategy is PSPACE-hard. We do this by reduction from {\sc Formula Game,} which is first reduced to an intermediate problem we call {\sc Formula Game Geography}, after which the latter is reduced to our decision problem.
\end{abstract}

\bigskip

\section*{Acknowledgments}
We would like to thank Rosalie Iemhoff and Robert Barish for their helpful conversations and comments. We also wish to thank the anonymous referees for their valuable feedback.

\bigskip

\section{Introduction} \label{sec:introduction}

Hive is a popular strategy game designed by John Yianni and published in 2001. Like classic games such as chess, checkers, and Go, it is a two-player, perfect-information, deterministic, turn-based tabletop game with at most one winner. 
In this paper, we
consider the computational complexity of the decision problem {\sc Hive}, which asks whether a given player has a forced winning strategy from a given position in a generalized version of Hive.
Our main result is that this decision problem is PSPACE-hard. We prove this by reduction from {\sc Formula Game}, first by reducing the latter to a variant of {\sc Generalized Geography} that we call {\sc Formula Game Geography (FGG)}, then in turn reducing FGG to {\sc Hive}.\footnote{The proof presented in this paper originates from 
work done for the bachelor thesis of the alphabetically first 
author, written under supervision of the second~\protect\cite{Andel}.}

While the standard version of Hive has only 22 pieces and a small set of rules, it is nevertheless a deeply complex and
strategic game. Still, it has only a finite number of possible game positions, making it possible in principle to construct a large look-up table which unveils a winning strategy. 
Therefore, as with most board games, investigating its asymptotic computational complexity requires us to generalize the game in some natural way. Typically this is done by generalizing 
the board to dimensions~$n \times n$ (cf.~\cite{FraenkelLichtenstein,Hearn2009,go,gobang,Rin-Schipper24,robson,storer}). However, Hive does not have a board; the playing field is determined by the placement of pieces. Hence, we generalize Hive to \emph{$n$-Hive} 
by stipulating simply that each player has an equal set of~$n$ pieces to play with, rather than 11. We make no further adjustments to the rules (Section~\ref{subsec: rules}).

We note, though, that our generalization of Hive to~$n$-Hive allows for each player to have multiple Queens. In Hive, the Queen is a piece that serves a role similar to that of the King in chess, so this generalization is arguably not as natural as one permitting only one Queen. By contrast, complexity bounds for $n\times n$ chess found in~\protect\cite{FraenkelLichtenstein} and~\protect\cite{storer} work with generalizations of chess with only one King per player. We return to this consideration in Section~\protect\ref{sec: conclusion}. 

Hereafter, we may refer to~$n$-Hive  as ``Hive'' for simplicity. Our main theorem is then stated as follows.
\begin{theorem-no-number}[Main theorem]
The problem {\sc Hive} of determining whether a given player has a winning strategy in an arbitrary Hive position is PSPACE-hard.
\end{theorem-no-number}

Analogous results have been found over the last several decades for various other strategy
games. In the 1980s, generalized versions of classic games such as Go~\cite{go}, Gobang~\cite{gobang}, Hex~\cite{hex}, chess~\cite{FraenkelLichtenstein,storer}, and checkers~\cite{robson} were proved to be PSPACE-hard. Since then, the complexity of numerous other games has been determined. Recent examples of PSPACE-hard problems include generalizations of the video game Lemmings~\cite{viglietta}, the ancient Hawaiian board game Konane~\cite{Hearn2009}, and the 2003 board game Arimaa~\cite{Rin-Schipper24}. Some of the known hard games have been found to be PSPACE-complete, others are known to be EXPTIME-complete, and others still have no known upper complexity bound. The determining factor in whether a two-player game's complexity lies within PSPACE is typically the presence or absence of a polynomial bound on the potential length of games (see, e.g.,~\cite[Ch. 6]{Hearn-Demaine}). In the case of Hive, no such bound is known to exist. Accordingly, we consider it likely that {\sc Hive} is not in PSPACE. 

To prove that {\sc Hive} is PSPACE-hard, we begin with the classic {\sc Formula Game} (Section~\ref{subsubsec: Formula Game}), which reduces to the problem {\sc Formula Game Geography (FGG)} we define in Definition~\ref{def: FGG}. Briefly, FGG is {\sc Generalized Geography} played on a directed graph that simulates a {\sc Formula Game} instance, 
with some added constraints on the degrees of nodes.
It is 
straightforwardly seen to be PSPACE-complete~(Corollary~\ref{cor: FGG PSC}). Our  
proof for~{\sc Hive} then follows the same broad approach as that of Lichtenstein and Sipser's well-known proof that $(n\times n)$ Go is PSPACE-hard~\cite{go}.
However, we find that implementing suitable gadgets in Hive is  
nontrivial, both locally in terms of 
each gadget's internal construction and globally 
with respect to the interconnections between 
gadgets.

First, we must account for Hive's unusual geometry. Unlike Go, chess, and most other classic games, Hive does not have pieces interacting on a square grid at convenient~$90^\circ$ angles.  
Instead, Hive pieces are hexagonal and  
always 
placed in 
such a way 
that 
every piece 
borders at least one other. While this 
is not prohibitively limiting, it does call for careful measures in gadget design, as well as
the construction of special gadgets 
with names such as~\emph{60$^\circ$ direction changer} and~\emph{120$^\circ$ direction changer}.

Second, with respect to more global positional features, we encounter challenges arising from our use of 
an important game rule called the
One Hive rule. This 
rule states that the set of all pieces in the position -- collectively called the \emph{Hive} -- must always be connected. Our reduction takes repeated advantage 
of this rule
to make many 
pieces immobile, 
arranging them 
so that 
they would illegally separate the Hive if moved. While this idea is locally 
fruitful in enabling us to limit the players' options,
it also 
demands special care in 
our handling of
gadget connections, 
in order 
to ensure 
that 
the position remains legal under the One Hive rule 
while also keeping the desired pieces
immobile. (See Section~\ref{subsec: remarks one Hive rule}.)

\enlargethispage{1.9\baselineskip} 
We note that our proof does not provide any nontrivial upper bound on {\sc Hive}'s
complexity. As remarked above, we consider it unlikely that {\sc Hive} is in PSPACE. Since the rules do not contain an analogue of the 50-move rule in chess to limit potential game length, we conjecture that {\sc Hive} is in fact EXPTIME-complete. However, we leave the resolution of this conjecture as a matter for future research 
(see Section~\ref{sec: conclusion}).

\subsection{Game rules} \label{subsec: rules}
The official rules of Hive can be found in~\cite{hive-rules}. The game is played on a flat surface by two players, \emph{Black} and \emph{White}. Both players have~eleven hex-shaped tile pieces at their disposal: three Ants, three Grasshoppers, two Beetles, two Spiders, and a Queen Bee. Each piece moves differently. The goal of the game is to surround the opponent's Queen Bee (hereafter usually ``Queen''). Players take turns either placing a new piece on the table or moving one that is already in play. If a player is not able to place or move a piece, the player passes. The game ends when a Queen Bee is surrounded by other pieces, upon which the player whose Queen is surrounded loses the game.\footnote{This wording implies that in a game of $n$-Hive with multiple Queens per player, a player loses when at least one of their Queens is surrounded. One could alternatively define a variant in which a player loses only when \emph{all} their Queens are surrounded. Or, similarly, one could stipulate that a Queen is removed from the game once it is surrounded, and the goal of the game is then to remove all opposing Queens. (This suggestion is due to an anonymous referee.) While we do not treat such variants in this paper, we expect that they, too, are computationally hard, though they require a different reduction.\label{fn: all Queens}}  
If both players would lose simultaneously, the game is a draw. In the proof we only consider positions in which all pieces are on the table. Hence we omit the rules for placing pieces. We first discuss the pieces and their movement rules. Then we will explain the One Hive rule and Freedom To Move rule.

\subsubsection{Queen Bee} \label{subsubsec: queen}
The Queen Bee, or Queen, moves one space per turn. See Figure~\ref{fig: queen} for an example.

\begin{figure}[htbp]
\centering
\includegraphics[scale=0.38,draft=false]{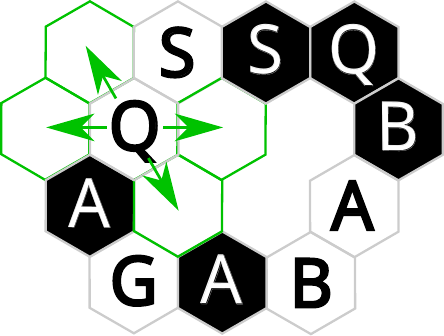}
\caption{The white Queen can move to any one of the four indicated spaces.}
\label{fig: queen}
\end{figure}

\subsubsection{Beetle} \label{subsubsec: beetle}
Like the Queen Bee, the Beetle moves one space per turn. Additionally, the Beetle can, unlike any other piece, move on top of another piece (either color). A piece with a Beetle on top of it is unable to move. Once on top of the Hive, the Beetle can move from tile to tile or drop off again. See Figure~\ref{fig: beetle}.

\begin{figure}[htbp]
\centering
\includegraphics[scale=0.38,draft=false]{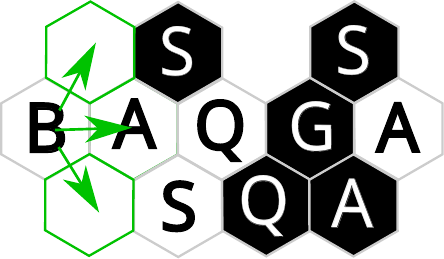}
\caption{The Beetle can move to one of the two open spaces or climb on top of the Ant.}
\label{fig: beetle}
\end{figure}

\subsubsection{Ant} \label{subsubsec: ant}
The Ant can move any number of spaces to any place it can reach, provided it follows the One Hive rule and Freedom To Move rule (Sections~\ref{subsubsec: one hive} and~\ref{subsubsec: freedom}). See Figure~\ref{fig: ant}.

\begin{figure}[htbp]
\centering
\includegraphics[scale=0.38,draft=false]{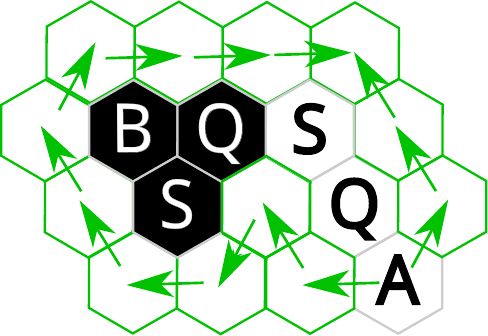}
\caption{The Ant can move to any one of the indicated spaces this turn.}
\label{fig: ant}
\end{figure}

\subsubsection{Grasshopper} \label{subsubsec: grasshopper}
The Grasshopper jumps over pieces. It starts its move by jumping on top of an adjacent piece, then keeps moving in the same direction until it comes off the Hive again. 
See Figure~\ref{fig: grasshopper}.

\begin{figure}[htbp]
\centering
\includegraphics[scale=0.38,draft=false]{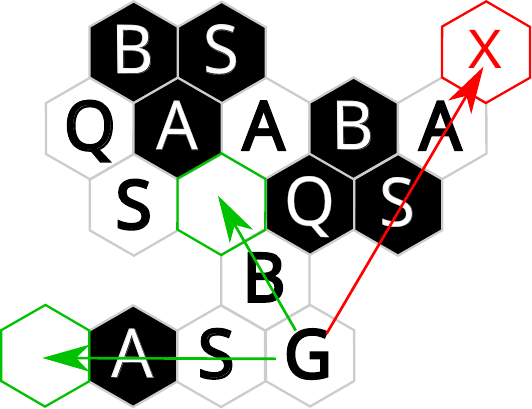}
\caption{The Grasshopper can jump to one of the two indicated spaces. The space marked \textcolor{red}{X} is unavailable, as there is a blank spot between.}
\label{fig: grasshopper}
\end{figure}

\subsubsection{Spider} \label{subsubsec: spider}
The Spider moves exactly three spaces per turn. During this movement it cannot go back and forth between two spaces. An example is in Figure~\ref{fig: spider}.

\begin{figure}[htbp]
\centering
\includegraphics[scale=0.45,draft=false]{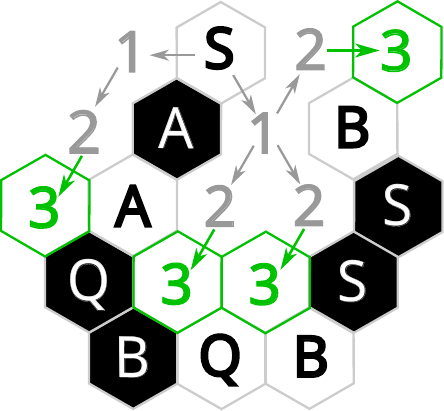}
\caption{The white Spider can move to one of the four spaces marked~\textcolor{nicegreen}{3}. By the One Hive rule (see Section~\protect\ref{subsubsec: one hive}), it may not start its movement by going directly to the space on its right marked~\textcolor{nicegrey}{2} and continue from there, as the Spider would lose touch with the Hive while in transit.}
\label{fig: spider}
\end{figure}

\subsubsection{One Hive rule} \label{subsubsec: one hive}
The pieces in play must at all times be connected. They can be seen to form one (big) Hive. Even \emph{during} a turn -- that is to say, while a piece is in transit sliding toward its destination -- the Hive may not be disconnected and no piece may be left stranded. See Figures~\ref{fig: spider} and~\ref{fig: onehiverule} for examples of movements forbidden under the One Hive rule. 

\begin{figure}
\centering
\subcaptionbox{The white Ant is not able to move, since this would split the Hive in two.\label{fig: one Hive 2}}
[.4\textwidth]{\includegraphics[scale=0.45,draft=false]{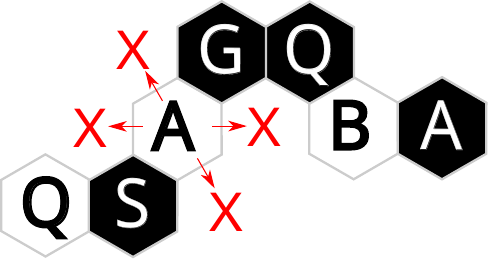}}
\hfill
\subcaptionbox{The Beetle may not move to the space marked \textcolor{red}{X}, as this would leave the Spider stranded while the Beetle is in transit.\label{fig: one Hive}}
[.4\textwidth]{\includegraphics[scale=0.45,draft=false]{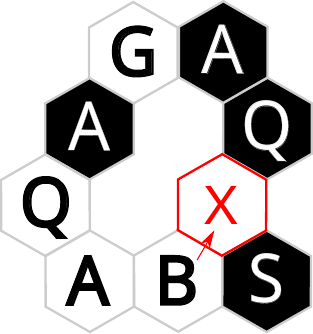}}
\caption{Examples of movement restrictions under the One Hive rule.}
\label{fig: onehiverule}
\end{figure}

\subsubsection{Freedom To Move} \label{subsubsec: freedom}
\enlargethispage{1.6\baselineskip} 
The pieces move in a sliding fashion. So, if a piece is (almost) surrounded and cannot slide physically out of its spot without displacing other pieces, it is not allowed to move. Consider the white Queen in Figure~\ref{fig: beetle}. Recall that this piece is hex-shaped; it is not just the letter ``Q'' written on a hex-shaped location. Since the entire hex is one piece, it is physically stuck and unable to slide out. See Figure~\ref{fig: freedom to move examples} for further examples. Note, however, that Grasshoppers and Beetles provide exceptions to the Freedom To Move rule, as they are able to jump over and climb on top of the Hive, respectively. 

\begin{figure}[htbp]
\centering
\includegraphics[scale=0.45,draft=false]{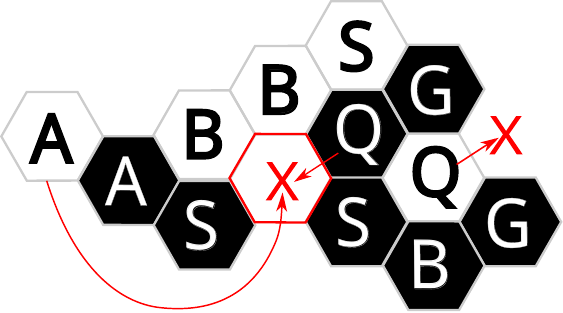}
\caption{The black Queen is unable to move, as it cannot physically slide \emph{out of} its space. The same holds for the white Queen. The white Ant is unable to move to the space marked \textcolor{red}{X}, as it cannot physically slide \emph{into} that space.}
\label{fig: freedom to move examples}
\end{figure}

\subsection{Proof tools} \label{subsec: proof tools}
Here
we discuss some strategic concepts and other tools used within the proof. 

\begin{enumerate}

\item \textbf{Win Threats} \ The most obvious strategic concept is that of a win threat. A piece threatening a win-in-one must be stopped.  See Figure~\ref{fig: trapping} for an example of a win threat being countered by locking down the dangerous piece using the One Hive rule (our next proof tool).

\item \textbf{Locking down pieces using the One Hive rule} \ In our proof it is useful to have a player immobilize an opposing piece by moving another piece away. Figure~\ref{fig: trapping} shows an example in which Black is to move and must immobilize a dangerous white Grasshopper.

\begin{figure}[htbp!]
\centering
\includegraphics[scale=0.35,draft=false]{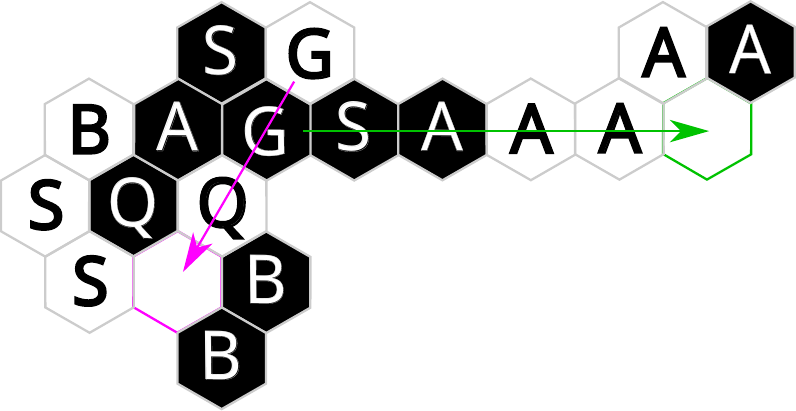}
\caption{White's threat here is the win-in-one by the Grasshopper; moving it to the indicated space would complete the surrounding of the black Queen. Black can immobilize this white Grasshopper by moving the black Grasshopper away. Then the white Grasshopper is the only link between the left and right sides of the Hive, making it no longer allowed to move. As an aside, recall from Section~\ref{subsubsec: grasshopper} that even if it were still allowed to move,  
it would need the black Grasshopper in its original 
space anyway in order to jump over and reach the winning space.}
\label{fig: trapping}
\end{figure}

In general, we call a piece \emph{locked down,} or simply \emph{locked,} when it is either unable to move (by the game rules) or not rational to move (because the piece's owner would lose the game soon after moving it). 

\item \textbf{Freeing a piece} In our proof, it will be useful to construct positions containing large numbers of locked pieces, to keep the position simple and control the flow of the game. A piece that is locked down may sometimes become unlocked (or \emph{free}) if the circumstances nearby change. When a locked-down piece has no prospect of ever becoming free in the future, 
we call it a \emph{dead} piece (see~below).

All pieces in the proof begin locked down except for one. The position will be arranged so that moving one piece will free another, which in turn frees the next piece while locking the previous one, and so on. Such arrangements can result in a forced sequence, or at least a sequence with a very limited number of choices. The freeing of a piece is usually done by giving the Hive a secondary path of connection so that the piece in question no longer needs to stay still just to keep the position legal under the One Hive rule. Examples occur in every gadget throughout the proof. In other cases, a piece is trapped by the Freedom to Move rule, until one of the pieces surrounding it is moved away.

\item \textbf{Selfmate hex} \ A hex is a \emph{selfmate} hex if a piece entering would complete the surrounding of a Queen of its own color. With careful placement of selfmate hexes in our construction, we can restrict the movement of some pieces to help control the flow of play. Selfmate hexes are especially useful
to restrict pieces whose available spaces are few and far apart, such as the Spider and Grasshopper. See Figure~\ref{fig: selfmate tool} for an example with a Grasshopper.
\begin{figure}[htbp!]
\centering
\includegraphics[scale=0.35,draft=false]{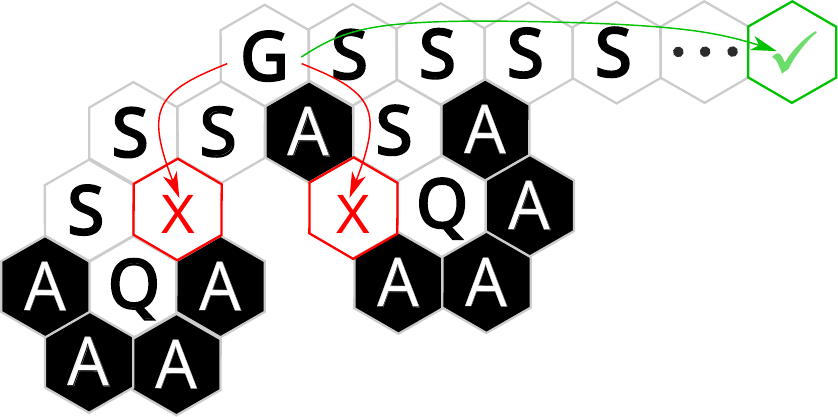}
\caption
{A position with two selfmate hexes for White, marked \textcolor{red}{X}. The Grasshopper cannot safely jump to these, as doing so surrounds a white Queen, losing immediately.}
\label{fig: selfmate tool}
\end{figure}

\item \textbf{Beetle tower} \ The Beetle is the one piece that can climb on top of the Hive, even on top of another Beetle that is already on top of the Hive. In fact, there is no limit to how high the Beetle can climb. It is thus possible to build a Beetle tower as high as desired, with any type of piece on the bottom.

\enlargethispage{\baselineskip}
\item \label{item: dead pieces} \textbf{Dead piece} \ As mentioned, a dead piece is one that is either illegal or fatal to move. There are multiple ways to construct a dead piece, all with benefits and drawbacks. The list below is not exhaustive, but these are the dead pieces used in the proof.
\begin{enumerate}
\item \emph{Dead interior piece:} a Spider or Ant that is completely or almost completely 
surrounded, placed so it will remain
this way until the end of the game. The color of this dead piece is irrelevant. Such pieces are labeled \textbf{DI} in diagrams.
\item \emph{Dead exterior piece:} a white Beetle on top of six black Beetles, with a white Queen on the bottom. (There is also a version with colors reversed.) This dead piece 
is 
usable in places where the top Beetle cannot meaningfully affect the game within one space's reach. If the white Beetle moves, the black Beetle underneath can immediately climb on top of it and lock it down. As moving the white Beetle made no immediate threat, Black can now surround the helpless white Queen underneath in at most five moves. Such pieces are labeled \textbf{DE} in diagrams, colored with the top Beetle's color. 

To implement this dead piece, we must take special care in our proof construction to ensure that the white Beetle's movement indeed achieves nothing strategically relevant, such as freeing a dangerous piece that can
win the game faster than the black Beetles can. 
If there would be any place where this condition fails, in principle it is possible to implement other dead piece designs -- for example, a white Beetle covering a black Ant that, if uncovered, could immediately reach a white Queen to surround. 
\item \emph{Special dead piece:} a Beetle on top of three enemy Beetles and an enemy Queen below. Inherently, it is not necessarily dead, but in the context where it is used, the top Beetle is strategically unable to move. Such pieces, labeled \textbf{DS}, are discussed further in Section~\ref{subsec: quantifier}.  \label{special dead piece}

\end{enumerate}

\item \textbf{Chains of locked-down pieces} \ Any number of pieces can be immobilized under the One Hive rule by placing them in a long chain with a dead exterior piece at the end. 

\end{enumerate}

\section{Proof preliminaries}\label{sec: preliminaries}

The material in this section is mostly standard (see, e.g.,~\cite[Ch.~8] {sipser2013}).

\subsection{Quantified Boolean Formulas} 

\begin{definition}
The set \emph{QBF (fully quantified Boolean formulas)} is the set of all syntactic strings $Q_1 v_1 \dots Q_n v_n \psi$, where $Q_i \in \{ \forall,\exists \}$, each $v_i$ is a Boolean variable, and $\psi$ is a Boolean formula in conjunctive normal form containing only variables from $\{v_1,\dots,v_n\}$.
\end{definition}
\begin{example} The formula $\exists v_1 \forall v_2 \exists v_3 \forall v_4 ((v_1 \lor \neg v_2 \lor v_3) \land (v_2 \lor \neg v_4))$ is in QBF.
\end{example}

\begin{definition}
The truth-value of any $Q_1 v_1 \dots Q_n v_n \psi \in$~\emph{QBF} is defined recursively by:

$\exists v_1\ Q_2 v_2 \dots Q_n v_n \psi$ is true if $Q_2 v_2 \dots Q_n v_n \psi$ is true for some possible truth-value for $v_1$.

$\forall v_1\ Q_2 v_2 \dots Q_n v_n \psi$ is true if $Q_2 v_2 \dots Q_n v_n \psi$ is true for each possible truth-value for $v_1$.
\end{definition}

\begin{definition}

{\sc TQBF} $= \{\langle \varphi \rangle \mid \varphi\  \text{is a true fully quantified Boolean formula}\}$. 
\end{definition}

\begin{theorem}
{\sc TQBF} is PSPACE-complete. 
\end{theorem}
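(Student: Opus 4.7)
The plan is to prove containment in PSPACE and then PSPACE-hardness.

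For containment, I would describe a recursive algorithm that evaluates a QBF formula $Q_1 v_1 \dots Q_n v_n \psi$ by stripping off $Q_1 v_1$, recursively evaluating the remaining formula with $v_1$ set to true and then to false, and combining the two results with $\land$ or $\lor$ according to whether $Q_1$ is $\forall$ or $\exists$. The base case, in which no quantifiers remain, evaluates the fully instantiated matrix $\psi$ directly. Because the recursion depth is $n$ and each stack frame stores only a current variable index plus one bit of partial assignment, the total space used is polynomial in the input.

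For hardness, I would fix a language $L \in \text{PSPACE}$ decided by a deterministic Turing machine $M$ using space $p(n)$ for some polynomial $p$, and on input $w$ construct in polynomial time a formula $\Phi_w \in \text{QBF}$ that is true iff $M$ accepts $w$. Each configuration of $M$ on $w$ occupies $O(p(|w|))$ cells and is therefore encoded by a polynomial number of Boolean variables. A polynomial-size Boolean formula $\mathrm{yields}(c, c')$ expresses that $c'$ is the one-step successor of $c$, and from this I would build, for each $t$ that is a power of two, a formula $\phi_t(c, c')$ asserting that $M$ can pass from $c$ to $c'$ in at most $t$ steps. Setting $\Phi_w := \phi_{2^{p(|w|)}}(c_{\text{start}}, c_{\text{accept}})$, with a canonical accept configuration, captures acceptance because a deterministic space-$p(n)$ machine halts within $2^{O(p(|w|))}$ steps.

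The main obstacle is keeping the size of $\phi_t$ polynomial in $\log t$. The naive recursive definition $\phi_{2t}(c, c') \equiv \exists m\,(\phi_t(c, m) \land \phi_t(m, c'))$ doubles the formula at each halving and yields exponential blowup. The standard Savitch-style workaround is
\[\phi_{2t}(c, c') \;:=\; \exists m\, \forall d\, \forall d'\, \bigl[\bigl((d=c \land d'=m) \lor (d=m \land d'=c')\bigr) \to \phi_t(d, d')\bigr],\]
which uses the subformula $\phi_t$ only once while letting universal quantification range over both pairs, adding only a polynomial increment at each of the $p(|w|)$ halvings. The base case $\phi_1(c, c')$ is $c = c' \lor \mathrm{yields}(c, c')$. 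A final step prenexes the quantifiers outward and converts the quantifier-free matrix to CNF via a Tseitin-style transformation with fresh existential witness variables, bringing $\Phi_w$ into the precise syntactic form of the QBF definition without changing its truth value or asymptotic size. Combining the two directions gives PSPACE-completeness.
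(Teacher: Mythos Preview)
Your proof is correct and follows the standard textbook argument (essentially the one in Sipser, Ch.~8). Note, however, that the paper does not actually prove this theorem: it is stated without proof as a well-known result, with the surrounding section explicitly flagged as ``mostly standard (see, e.g.,~\cite[Ch.~8]{sipser2013}).'' So there is no paper proof to compare against; your sketch simply supplies what the paper deliberately omits.
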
 

\subsubsection{Formula Game} \label{subsubsec: Formula Game} 
Formula Game is a two-player game played between a so-called $\exists$-player and $\forall$-player with a given formula~$Q_1 v_1 \dots Q_n v_n \psi \in$~QBF. Players take turns choosing truth-values for the variables~$v_1,\dots,v_n$ in order, with the type of the $i^\text{th}$ quantifier~$Q_i$ determining which player chooses the value of~$v_i$. After~$v_n$ is valuated, the $\exists$-player wins if the formula is true; otherwise, the $\forall$-player wins. Alternatively, since~$\psi$ is in conjunctive normal form, we could equivalently continue the game for two more turns: one in which the $\forall$-player chooses a clause~$c$, and one in which the $\exists$-player chooses a literal~$l$ in~$c$. The $\exists$-player then wins if and only if~$l$ is true. In what follows, we assume this alternate version of the game.

\begin{theorem}
The problem {\sc Formula Game} of determining whether the $\exists$-player has a winning Formula Game strategy for a given formula is PSPACE-complete.
\end{theorem}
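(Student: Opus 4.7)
The plan is to establish both directions of PSPACE-completeness: containment in PSPACE, and PSPACE-hardness. Since the previous theorem gives us TQBF as a known PSPACE-complete problem, I would use it for hardness, via what amounts to a trivial reduction.

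For PSPACE-hardness, I would reduce from TQBF using the identity map on inputs. Given an instance $\varphi = Q_1 v_1 \dots Q_n v_n \psi$, I would prove $\varphi \in$~TQBF if and only if the $\exists$-player has a winning strategy in Formula Game on $\varphi$, by induction on $n$. The base case $n = 0$ is immediate, since both sides reduce to the question of whether the variable-free formula $\psi$ evaluates to true, which is a question the ``game'' settles with no moves played. For the inductive step, the recursive truth-definition for QBF mirrors the game semantics exactly: a formula beginning with $\exists v_1$ is true iff some assignment of $v_1$ makes the remainder true, which is precisely the condition that the $\exists$-player (who chooses $v_1$ on move one) has a winning move; the dual argument handles $\forall v_1$. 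The reduction is computable in logarithmic space.

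For membership in PSPACE, I would describe the standard recursive alternating evaluation. On input $\varphi = Q_1 v_1 \dots Q_n v_n \psi$, recursively compute whether $Q_2 v_2 \dots Q_n v_n \psi[v_1 := 0]$ and $Q_2 v_2 \dots Q_n v_n \psi[v_1 := 1]$ hold, and combine the two Boolean answers using $\lor$ if $Q_1$ is $\exists$ and $\land$ if $Q_1$ is $\forall$. The recursion depth is $n$, and each stack frame needs only a constant amount of local state (which truth value is currently being tried, and the answer of any subcall already returned), so the total space usage is polynomial in $|\varphi|$. Propositional evaluation of $\psi$ under a complete assignment, performed at the leaves, is likewise in polynomial space.

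I do not expect any step to present a substantive obstacle: the Formula Game is designed precisely so that its winning-strategy predicate coincides with truth-in-QBF, and the polynomial-space algorithm is textbook. The only care needed is stating the inductive correspondence cleanly enough that the reduction from TQBF is manifestly correct.
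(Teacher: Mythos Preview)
Your proposal is correct and follows the standard textbook argument. Note, however, that the paper does not actually supply a proof of this theorem: it is stated as a known preliminary result, with the surrounding section flagged as ``mostly standard'' and referred to Sipser's textbook. Your sketch---the identity reduction from TQBF for hardness, justified by the tautological correspondence between QBF truth and $\exists$-player winning strategies, together with the recursive polynomial-space evaluator for membership---is precisely the standard proof the paper is implicitly citing.
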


\begin{remark}
It is straightforward to see that every formula in QBF is equivalent to one with the form $\exists v_1 \forall v_2 \exists v_3 \forall v_4 \dots \exists v_n \psi$ -- that is, a formula whose quantifiers alternate and whose first and last are both $\exists$. We can easily construct it by adding quantifiers where needed that bind 
extra variables not occurring in $\psi$. 
We therefore have the following result.
\end{remark}

\begin{theorem}\label{thm: FG PSC}
{\sc Formula Game} and {\sc TQBF} remain PSPACE-complete even when restricted to formulas of the form $\exists v_1 \forall v_2 \exists v_3 \forall v_4 \dots \exists v_n \psi$.
\end{theorem}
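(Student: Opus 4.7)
The plan is to give a polynomial-time many-one reduction from unrestricted TQBF (known PSPACE-complete by the theorem just cited) to TQBF restricted to formulas with the alternating prefix $\exists v_1 \forall v_2 \exists v_3 \forall v_4 \dots \exists v_n$, and likewise for Formula Game. Since the restricted problem is clearly a special case of the general one, it already lies in PSPACE; only hardness needs work.

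Given an arbitrary input $\varphi = Q_1 v_1 \dots Q_n v_n\, \psi \in$~QBF, I would construct an equivalent formula $\varphi'$ of the desired form as follows. First, sweep through the quantifier prefix from left to right, and whenever two consecutive quantifiers $Q_i, Q_{i+1}$ are of the same type, insert between them a quantifier of the opposite type binding a brand-new variable that does not occur in $\psi$. After this sweep the prefix strictly alternates. Second, if the leftmost quantifier is $\forall$, prepend a fresh $\exists u_0$; if the rightmost quantifier is $\forall$, append a fresh $\exists u_{n+1}$. The resulting prefix has the required alternating shape beginning and ending with $\exists$. Each inserted dummy variable is absent from $\psi$, so it has no semantic effect: for any assignment to the original $v_i$'s, the inner matrix is independent of the dummy variables, making both $\exists$ and $\forall$ quantifications over them trivial. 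Hence $\varphi'$ is logically equivalent to $\varphi$, so $\varphi \in$ TQBF iff $\varphi' \in$ TQBF.

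The construction adds at most $n+2$ new quantifiers and no new clauses, so $|\varphi'| = O(|\varphi|)$ and it can be built in linear time. Consequently the restricted version of TQBF remains PSPACE-hard, and therefore PSPACE-complete. For Formula Game the same reduction works verbatim: a winning strategy for the $\exists$-player in one formula induces one in the other by ignoring or arbitrarily answering the moves that valuate dummy variables, since these variables do not appear in~$\psi$ and cannot influence the outcome.

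There is no real obstacle here; the work is entirely in the remark preceding the theorem statement, and the proof is essentially a bookkeeping exercise to verify that the syntactic rewriting preserves both truth-value and polynomial size. The one small thing to be careful about is ensuring that the fresh variable names used for padding are pairwise distinct and disjoint from $\{v_1,\dots,v_n\}$, which is immediate.
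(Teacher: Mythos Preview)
Your proposal is correct and follows essentially the same approach as the paper: the paper does not give a separate proof for this theorem but simply notes in the preceding remark that one can add dummy quantifiers binding fresh variables not occurring in~$\psi$ to force the prefix into alternating $\exists\forall\exists\cdots\exists$ shape, which is exactly the construction you spell out in detail.
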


\subsubsection{Generalized Geography} 

\begin{definition}\label{def: GG}
\emph{Generalized Geography} is a two-player game played on the nodes of a directed graph.
A designated \emph{start} node begins the game marked. Players then alternate turns by marking an unmarked node
that has an incoming edge from the last marked node. A player to move with no unmarked nodes available (all available choices are already marked) loses the game. {\sc Generalized Geography (GG)} is the corresponding decision problem.
\end{definition}
The proof of the theorem below is very well known (see, e.g.,~\cite[Sec. 8.3]{sipser2013}), but we present it here anyway as it will aid with the exposition of our main proof.
\begin{theorem}\label{thm: GG PSH}
{\sc GG} is PSPACE-complete.
\end{theorem}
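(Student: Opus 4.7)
The plan is to prove both directions separately. For PSPACE membership, I would describe a recursive procedure \texttt{CanWin}$(v, M)$ that, given the currently marked node $v$ and the set $M$ of already-marked nodes, returns true iff the player to move wins: branch over the out-neighbors of $v$, returning true iff some unmarked successor $w$ yields \texttt{CanWin}$(w, M \cup \{w\}) = $ false. Since every move permanently adds a vertex to $M$, recursion depth is bounded by $|V|$ and each stack frame needs to store only a vertex (with the history of marked nodes sitting implicitly in the call stack), giving polynomial space.

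For hardness I would reduce from Formula Game on formulas $\varphi = \exists v_1 \forall v_2 \cdots \exists v_n \psi$ (PSPACE-complete by Theorem~\ref{thm: FG PSC}, with $n$ odd). Given $\varphi$, I would build a directed graph consisting of a chain of $n$ \emph{diamond} gadgets followed by a \emph{clause-literal} gadget. Diamond $i$ has a top node $a_i$, two side nodes $t_i$ and $f_i$, and a bottom node $b_i$, with edges $a_i \to t_i$, $a_i \to f_i$, $t_i \to b_i$, $f_i \to b_i$, and $b_i \to a_{i+1}$. Taking $a_1$ as the start node, the turn parity works out so that the choice of side at $a_i$ is made by the $\exists$-player when $i$ is odd and by the $\forall$-player when $i$ is even, exactly matching the quantifier pattern. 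After diamond $n$, $b_n$ feeds through a single padding node into a clause-choice node $C$ where the $\forall$-player picks a clause $C_j$; from $C_j$ there are three outgoing edges, one to each literal occurrence, so the $\exists$-player picks a literal. Each literal node has a single outgoing edge: a positive literal $v_i$ points to $t_i$, and a negative literal $\neg v_i$ points to $f_i$.

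The heart of the argument is what happens after a literal is picked. If the chosen literal is true under the assignment encoded by the chain, its target node has already been marked, so the $\forall$-player has no legal move and loses; if the literal is false, the $\forall$-player safely marks the unmarked target, but that target's only out-edge points to the already-marked $b_i$, leaving the $\exists$-player stuck. A routine induction on $n$ then shows that the $\exists$-player wins GG from $a_1$ iff they win Formula Game on $\varphi$, and the construction is clearly polynomial-time. The hard part will be purely bookkeeping: carefully tracking whose turn it is at each gadget junction (including aligning parity through the padding node between $b_n$ and $C$) and verifying that the literal-to-diamond-side mapping produces the correct loser in both the marked-target and unmarked-target cases.
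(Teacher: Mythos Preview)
Your proposal is correct and essentially identical to the paper's proof: the paper omits the PSPACE-membership argument (which you supply) and performs the same Formula-Game-to-diamond-chain reduction with a clause node $c$, clause-choice, literal-choice, and back-edges into the diamond sides. One small caution on the bookkeeping you already flag: with $n$ odd, Player~2 marks $b_n$ and Player~1 should mark $C$ next so that Player~2 chooses the clause---your extra ``padding node'' between $b_n$ and $C$ shifts parity by one and should be dropped (the paper has $b_n \to c$ directly).
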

\begin{proof}
We omit the proof that GG is in PSPACE. We give a polynomial-time reduction from {\sc Formula Game} played on formulas with the form $\exists v_1 \forall v_2 \exists v_3 \forall v_4 \dots \exists v_n \varphi$.
Players~1 and~2 from the GG game respectively mimic the $\exists$- and $\forall$-players. We construct the
formula's quantifier prefix by a chain of diamond gadgets (see Figure~\ref{fig: diamond}). Each diamond represents the choice that the
$\exists$-player, in case of an existential quantifier, or $\forall$-player, in case of a universal quantifier,
must make. The assignment of truth values for
variables 
is now simulated as described in Figure~\ref{fig: diamond}. 

\begin{figure}[htbp]
\includegraphics[width=\textwidth,scale=0.8]{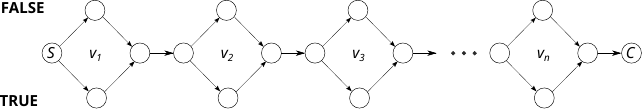}
\caption{Beginning at the designated start node~$S$, Player~1 has the first choice: marking the lower node sets $v_1$ to {\tt true}, while marking the upper node sets $v_1$ to {\tt false}. Both of these nodes have one outgoing edge, going to the right-hand node of the first diamond. Player~2 marks this node, after which Player~1 marks the first node of the next diamond. Player~2 can now choose to set $v_2$ to {\tt true} by marking the lower node, or to {\tt false} by marking the upper node. Play continues this way until Player~1 finally chooses for $v_n$ (since the last quantifier is~$\exists$).} \label{fig: diamond}
\end{figure}

After play has gone through the last diamond, we arrive at node~$C$, with Player~2 to move. In the Formula Game being simulated, all variables have been assigned a value by now and the players take the two final turns. To simulate these, we need two help gadgets: a \emph{clause chooser} gadget and \emph{literal chooser}
gadget. We 
construct the \emph{clause chooser} gadget by representing each clause with a node. Node~$C$ is given an outgoing edge to each of these nodes. Player 2 can now choose to continue play in any one of the clauses.

We construct the \emph{literal chooser} gadget by first adding a new node for each literal. We then add edges from each clause node to the nodes representing its literals. We also add an edge from each one of these literal nodes to the node corresponding to it within the diamond structures (i.e., the edges from nodes for literals~$v_i$ and~$\neg v_i$ respectively go to the lower and upper nodes in the diamond of~$v_i$).
In response to Player~2's choice of clause made in the \emph{clause chooser} gadget, Player~1 chooses a literal~$l$~from the chosen clause and marks its node. 

This node has only one outgoing edge, leading to a diamond-structure node that was previously marked in the game if and only if one of the players selected it during the earlier variable assignment simulation. If it is indeed already marked (the variable's assigned value makes the chosen literal true), Player~2 now has to mark it again and loses. If it is not already marked, Player~2 can mark it and force Player~1 to lose, as the only available node from here is the already marked node on the right side of the diamond.

So, with this construction, Player~1 has a winning strategy in the GG game if and only if the $\exists$-player has a winning strategy in the Formula Game. 
\end{proof}

\begin{theorem}[S.~Reisch~\cite{hex}] \label{thm: gg final version}
{\sc Generalized Geography} on graphs with maximum degree~3, maximum indegree~2 and maximum outdegree~2 is PSPACE-complete.
\end{theorem}

\begin{proof}
We can replace all nodes with indegree~3 by an equivalent chain of nodes with indegree~2 or less (see Figure~\ref{fig: indegree}). The treatment of nodes with outdegree~3 can be seen by reversing all arrows in the diagram. Indegree or outdegree above~3 can be handled similarly with a longer chain. 
The final case of indegree and outdegree exactly~2 (making total degree~4) is similarly straightforward. 
\begin{figure}[htbp]
\centering
\includegraphics[scale=1.35]{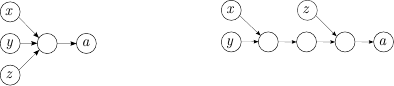}
\caption{A node with indegree~3 (left) is replaced by an equivalent chain of nodes with indegree~2 or less (right).} \label{fig: indegree}
\end{figure}
\end{proof}

Note that in PSPACE-hardness proofs of many other games, such as Go~\cite{go} and Hex~\cite{hex}, the reduction is done from a version of GG that is also assumed to be
planar. For our proof, however, this assumption is not necessary.\footnote{
\label{fn: planarity} It has been well known since~\cite[Fig. 8a]{hex} that GG is PSPACE-complete even when the given graph is assumed to be planar. For many games, this assumption is indispensable in a hardness proof,
as it is not possible to let two structures cross without interfering with each other. For that reason, proofs for such games require
a way to deal with any crossing edges in GG. Planarity ensures that these crossings simply never occur. Hive has a planar nature as well, except for the fact that the Grasshopper  (and technically, the Beetle)
can travel over structures. We use this to our advantage to overcome the issue of crossing edges. See Section \protect\ref{subsec: crossings} for details. }

We now define the problem of {\sc Formula Game Geography}, which is easily seen to be PSPACE-complete. In Section~\ref{sec: proof} we reduce this to {\sc Hive}, establishing our main result.

\begin{definition}\label{def: FGG}
{\sc Formula Game Geography (FGG)} is {\sc GG} played on a graph that is constructed by first applying
the reduction from Theorem~\ref{thm: GG PSH}
to a {\sc Formula Game} instance with the form specified in Theorem~\ref{thm: FG PSC}, then running the output of that through the reduction from Theorem~\ref{thm: gg final version}. 
That is, {\sc FGG} is {\sc GG} played on a graph that simulates a {\sc Formula Game} instance with the form $\exists v_1 \forall v_2 \exists v_3 \forall v_4 \dots \exists v_n \psi$ and satisfies the degree properties of Theorem~\ref{thm: gg final version}.
\end{definition}

\begin{corollary}\label{cor: FGG PSC}
{\sc FGG} is PSPACE-complete.
\end{corollary}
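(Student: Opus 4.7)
The plan is to combine the two reductions already established and observe that each step preserves both polynomial-time computability and the winning condition. By Theorem~\ref{thm: FG PSC}, Formula Game restricted to alternating formulas of the form $\exists v_1 \forall v_2 \dots \exists v_n\, \psi$ is PSPACE-complete. The reduction from Theorem~\ref{thm: GG PSH} converts any such instance in polynomial time into a GG graph (via diamond chains, a clause-chooser, and a literal-chooser) such that Player~1 has a winning strategy in GG iff the $\exists$-player has one in the Formula Game. The reduction from Theorem~\ref{thm: gg final version} then takes any GG graph in polynomial time to one with maximum degree~3, maximum indegree~2, and maximum outdegree~2, preserving which player wins. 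By the definition of FGG (Definition~\ref{def: FGG}), this composition is exactly the reduction producing an FGG instance, so Formula Game~$\le_p$~FGG. Because Formula Game is PSPACE-hard, FGG is PSPACE-hard as well.

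For membership, I would appeal to the standard fact (noted but not proved in Theorem~\ref{thm: GG PSH}) that GG lies in PSPACE: one can evaluate a GG position using the usual recursive min/max procedure, storing only the current marked path and a polynomial amount of bookkeeping, since games are bounded in length by the number of nodes. FGG is by definition just the restriction of GG to a particular polynomial-time recognizable class of graphs, so an instance of FGG is also an instance of GG, and the same PSPACE procedure decides it. Combining hardness and membership yields PSPACE-completeness.

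The only things that really need checking are routine: that composing two polynomial-time, winner-preserving reductions gives a polynomial-time, winner-preserving reduction, and that the output of the second reduction genuinely satisfies the degree constraints of Section~\ref{subsubsec: max degree}. Both are immediate from how the reductions are stated, so I do not anticipate a real obstacle here. The value of the corollary is mainly bookkeeping: it packages the specific intermediate graph produced by this composition as a named PSPACE-complete problem, which will then serve as the convenient starting point for the Hive reduction in the main proof.
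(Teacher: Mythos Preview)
Your proposal is correct and matches the paper's intent: the paper states the corollary without proof, treating it as an immediate consequence of composing the polynomial-time, winner-preserving reductions of Theorems~\ref{thm: GG PSH} and~\ref{thm: gg final version} together with GG's membership in PSPACE. Your write-up simply makes explicit the routine details the paper leaves implicit.
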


\section{Main proof} \label{sec: proof}

We now set out to reduce FGG to {\sc Hive}.
We begin with an overview of the proof structure and general idea. Then we introduce and explain the individual gadgets used in the proof. In Section~\ref{subsec: remarks one Hive rule}, we describe how the gadgets are connected together. Section~\ref{subsec: crossings} details how we handle crossings. Finally, in Section~\ref{subsec: completing} we make a few miscellaneous observations, including that the reduction can be carried out in polynomial time and that the constructed position is legally reachable under Hive rules. 

\subsection{Proof structure and general idea} \label{subsec: general idea}

Given is an FGG instance, which by definition corresponds to some quantified Boolean formula~$\varphi = \exists v_1 \forall v_2 \exists v_3 \forall v_4 \dots \exists v_n \psi$. That is, we are given a graph consisting of components that correspond to components of~$\varphi$ -- i.e., to quantifiers, variables, clauses, etc. 
It will therefore be convenient at times to identify the graph components (for example, a diamond structure of the type seen in the proof of Theorem~\ref{thm: GG PSH}) with the corresponding components of~$\varphi$ (in this example, a quantified variable).
In this way, our proof will somewhat resemble a proof by reduction directly from {\sc Formula Game}, though we also take advantage of the degree properties provided in Theorem~\ref{thm: gg final version}.

From the given graph that represents~$\varphi$, we will construct a Hive position from which the game will proceed through a series of four stages, corresponding to the stages of play seen in the reduction from {\sc Formula Game} to GG (Theorem~\ref{thm: GG PSH}).
Black will represent the $\exists$-player (Player~1) and White will represent the $\forall$-player (Player~2).

First, in the \emph{quantifier} stage, the players simulate taking turns assigning truth-values to the variables. Next, the $\forall$-player chooses a clause in the \emph{clause chooser} stage. Then, in the \emph{literal chooser} stage, the $\exists$-player chooses a literal $l$ from that clause. Finally, the \emph{tester} stage consists of a sequence of forced moves that simulate
evaluating the chosen literal's assigned value. Black will win if and only if this literal~$l$ was originally assigned {\tt true}.

A main feature in the design of our simulation is that  all pieces are locked down at all times, except for one. 
So the player whose turn it is can
only move one 
piece. Depending on the situation, the piece can move to just one space or to a choice between two spaces.

We regulate the game flow
mostly
by locking and unlocking pieces
through repeated use of the One Hive rule. That this works in the case of any single gadget can be easily verified. However, it is not trivial that pieces that would be locked by the One Hive rule in individual gadgets remain locked when the gadgets are connected 
together.
The details of
this matter are treated in Section~\ref{subsec: remarks one Hive rule}. 

\begin{theorem}[Main theorem] \label{thm: hive}
Hive is PSPACE-hard.
\end{theorem}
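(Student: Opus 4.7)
The plan is to give a polynomial-time reduction from FGG (PSPACE-complete by Corollary~\ref{cor: FGG PSC}) to the problem of determining who wins a given Hive position. Given an FGG instance, corresponding by Definition~\ref{def: FGG} to a quantified formula $\exists v_1 \forall v_2 \dots \exists v_n \psi$ together with its associated bounded-degree directed graph, I would construct a Hive position whose game tree mirrors the GG play on that graph, with Black playing the role of the $\exists$-player and White of the $\forall$-player. The global idea is to maintain the invariant that at every point all pieces in the Hive are locked down except for a single ``active'' piece, so that legal play consists of either one forced move or a binary choice. Movement of the active piece frees the next piece in sequence while re-locking the previous one, producing a forced flow that simulates the mark traveling along edges in the FGG graph.

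The local construction replaces each graph node by a Hive gadget and each edge by a ``wire'' of pieces joining gadgets, with the One Hive rule doing the bookkeeping of which pieces are movable. First I would design a \emph{quantifier/tester} gadget that implements the diamond at each $Q_i v_i$: the relevant player's active piece has exactly two legal destinations, corresponding to the \texttt{true}/\texttt{false} branches, and one of the two destinations must retain a distinguishing feature (a specific occupied or empty hex) that will later be read by the check-back. I would use Spiders and Grasshoppers, whose sparse move sets combine well with selfmate hexes (Section~\ref{subsubsec: selfmate}), and I would cap dead-end wires with the type-2 dead piece (a white Queen under a Beetle tower) to absorb pieces whose purpose is purely to be immobilized via the One Hive rule. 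Next I would build a \emph{clause chooser} gadget that hands White a branching choice among clause wires, and a \emph{literal chooser} gadget where Black in turn picks one wire leading back to the variable's diamond; the ``check-back'' is performed by forcing Black's piece into the hex associated with the literal's truth assignment, where the position either resolves immediately as a forced win-in-one (if the literal is true under the assignment recorded at the diamond) or leaves the $\exists$-player unable to avoid a loss (if false). Dead pieces of type~3 (Section~\ref{subsubsec: dead}, item~\ref{special dead piece}) would be used in the tester to carry the Boolean value without being prematurely disturbed.

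Because Hive is played on a hexagonal lattice rather than a square grid, I would additionally design \emph{60-degree} and \emph{120-degree direction changer} gadgets so that wires can bend to route between arbitrarily placed gadgets, and a Grasshopper-based \emph{crossing} gadget so that two wires may cross without interacting (exploiting the footnote~\ref{fn: check-back} observation that Hive's otherwise planar geometry is relaxed by Grasshopper jumps). With these local pieces in hand, I would then specify the global layout: place each graph node's gadget in the plane, route each edge through direction changers and crossings, and attach dead-piece terminals wherever needed to anchor the One Hive rule's locking effect. A polynomial size bound follows because the FGG graph has bounded degree and each gadget has constant size with wires of length polynomial in the graph's embedding. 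Finally I would verify that the constructed position is \emph{legally reachable} under the full Hive rules, so that hardness applies to the actual game rather than to an artificial position class.

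The main obstacle I expect, and the reason this is not merely a routine glueing of gadgets, is the globalization of the One Hive rule. In isolation, each gadget locks down its non-active pieces by arranging that their removal would disconnect the Hive; but when two gadgets are joined by a wire, the Hive acquires new connectivity, and a piece that was essential as a cut-vertex in its own gadget may cease to be one. Verifying that no unintended piece becomes mobile after assembly, and that no adversary can exploit a latent alternative move to break the forced flow, requires a careful global connectivity argument. I would handle this as in Section~\ref{subsubsec: remarks one Hive rule} by designing wires with a ``spine'' of dead pieces so that every cross-section of every wire has a unique cut-vertex at the intended active piece, and by proving a separation lemma stating that the cut-vertex structure within a gadget is preserved under attachment to wires terminating in type-2 dead pieces. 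Once that invariant is established, the correspondence between legal continuations in Hive and legal continuations in FGG is immediate, and the equivalence of winning strategies follows by induction on the number of remaining diamond choices, completing the reduction and establishing PSPACE-hardness.
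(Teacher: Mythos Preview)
Your plan matches the paper's approach almost exactly---reduction from FGG, Black as the $\exists$-player, the single-active-piece invariant enforced by the One Hive rule, Spider/Grasshopper-based quantifier/tester, clause and literal choosers, direction changers, Grasshopper crossings, reachability and size checks. However, your proposed handling of the global One Hive rule has a real gap. The FGG graph necessarily contains \emph{cycles}: the check-back edges run from literal nodes back into the diamonds, and the two sides of each diamond rejoin. If you realise these as connected wires in the Hive, the piece-adjacency graph of the position acquires cycles too. A piece that was a cut-vertex in its gadget in isolation ceases to be one once a cycle passes through it, because removing it leaves the Hive connected via the other arc of the cycle. No local ``spine'' design can repair this, and your proposed separation lemma (``attachment to wires terminating in type-2 dead pieces'') is stated only for tree-like attachments---exactly the case that does \emph{not} arise for the check-back and diamond-rejoin wires.

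The paper's fix, which you are missing, is the \emph{gap} gadget (Section~\ref{subsubsec: gap}): a point where the Hive is physically disconnected, bridged only when the active Spider steps across it during play. By inserting a gap on one of the two diamond branches, on each literal-to-tester path, and on one side of every crossing, the paper forces the piece-adjacency graph to remain a tree at all times, so cut-vertex reasoning goes through globally (this is exactly what Section~\ref{subsubsec: remarks one Hive rule} and the lemma in Section~\ref{subsubsec: crossings} establish). You also omit two smaller but necessary pieces: \emph{join} gadgets, needed because a literal appearing in several clauses produces several incoming paths that must merge before entering the tester's single entrance, and \emph{turn switchers}, needed to keep the colour of the active piece in sync with whose turn it is across gadget boundaries.
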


\begin{proof}
We reduce FGG to Hive. To simulate the components of an FGG graph, we construct gadgets for the following: choice of truth-value at a quantifier, $\forall$-player's choice of clause, $\exists$-player's choice of literal, truth-value testing of the chosen literal, and joining two edges. The first and fourth are handled by a single gadget, the \emph{quantifier/tester}. The second and third are handled respectively by the \emph{clause chooser} and \emph{literal chooser} gadgets. The last is handled by the \emph{join}. For connecting these gadgets together, we also have the following structures: \emph{turn switcher}, \
\emph{direction changer}, and \emph{gap}. All constructions can be rotated in~$60^\circ$ increments when needed.

Before we present the gadgets, recall that dead exterior pieces must always be placed such that the top Beetle has no opportunity to create short-term threats in one step (such as by freeing a dangerous Ant that can win immediately). 
By inspection, it is straightforward to confirm that this condition is clearly met within each gadget. After Sections~\ref{subsec: remarks one Hive rule}-\ref{subsec: crossings}, it is also straightforward to verify
that
this condition holds
when
gadgets are connected together.

\subsection{Quantifier/tester} \label{subsec: quantifier}
This gadget is the heart of the reduction. It simulates both the choice a player has to make when assigning a value to a variable and the test that checks whether the variable's chosen value makes the later chosen literal true, which decides who wins. Figure~\ref{fig: E-quantifier spider untrapper} depicts the gadget for existentially quantified variables~$v_1, v_3, v_5,\dots, v_n$. For universally quantified variables~$v_2,v_4,\dots,v_{n-1}$, the construction is the same with colors reversed. The gadget's main piece is the Spider~$S$, which begins locked by the Freedom to Move rule but soon becomes freed to move either down or up. The choice of where it moves determines whether the simulated variable is assigned {\tt true} or {\tt false}. 

\begin{figure}[htbp]
\centering
\includegraphics[angle=-90,width=.8143\textwidth,draft=false]{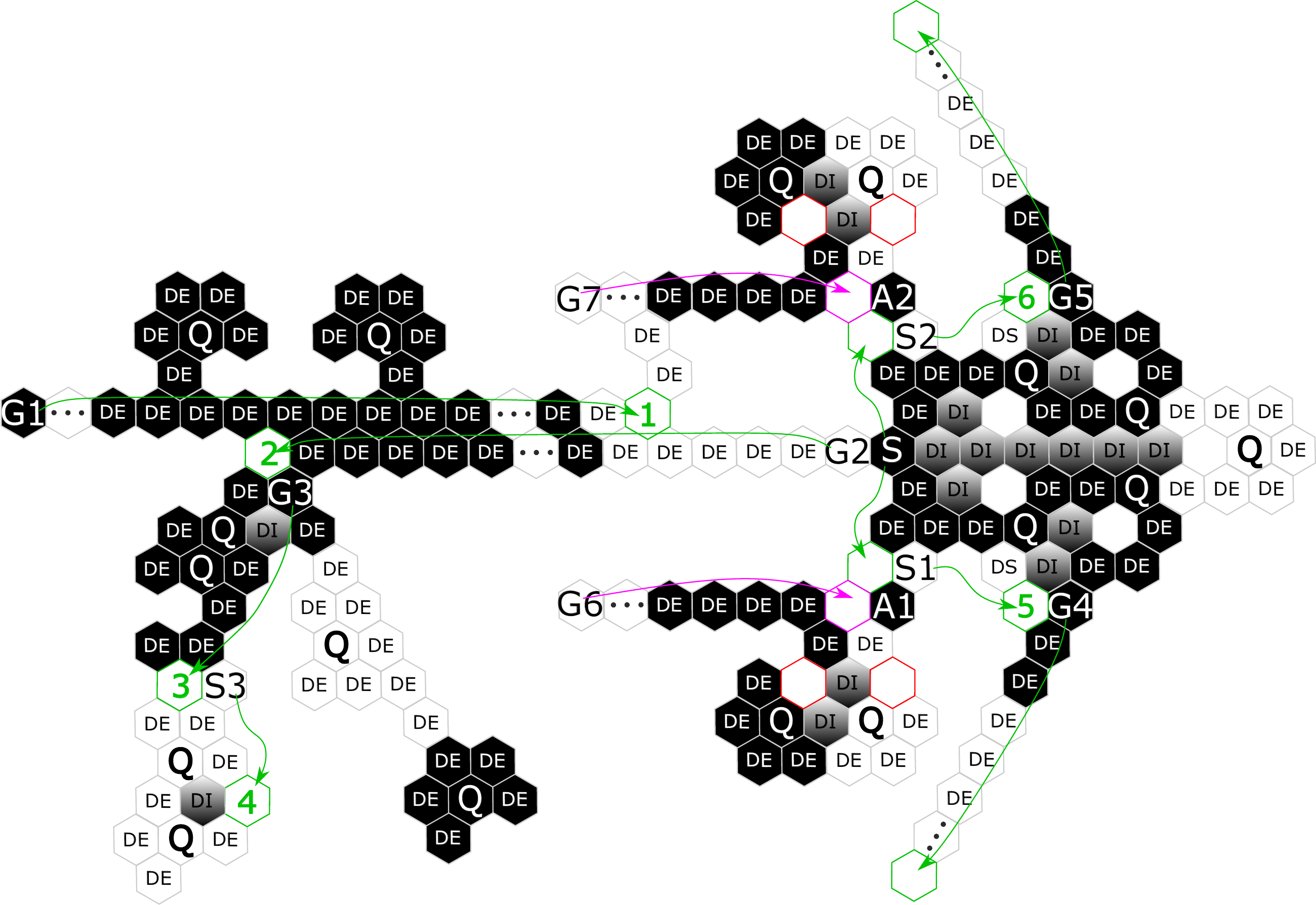}
\caption{A \emph{quantifier/tester} gadget for $\exists$-quantifiers. This gadget plays two roles in the FGG simulation. First, it simulates the choice given to the $\exists$-player for the quantification of one variable~$v_i \in \{v_1, v_3, v_5, \dots v_n\}$. Later on, after the \emph{literal chooser} stage has passed and the $\exists$-player has chosen a literal from the clause selected by the $\forall$-player, if that literal happens to be either~$v_i$ or~$\neg v_i$
then this gadget simulates the \emph{tester} that checks whether the assignment satisfies the chosen literal. Otherwise, the test is performed by a different 
\emph{tester} gadget, corresponding to the appropriate variable. The $\exists$-player then wins if and only if the chosen literal is indeed satisfied (see Lemma~\protect\ref{lem: first quantifier}).}
\label{fig: E-quantifier spider untrapper}
\end{figure}

All pieces in the diagram begin locked, except Grasshopper~1, which is located externally and is not part of this gadget. In the case of variables~$v_2, \dots, v_n$, Grasshopper~1 enters from the previous \emph{quantifier/tester gadget}. In the case of~$v_1$, Grasshopper~1 stands alone. When the simulation starts, this latter Grasshopper is the only free piece. In general, we use hexes with ``$\dots$'' to denote a lengthy chain of dead exterior pieces, with whatever is on the far side being understood as separate from the structure in the figure.

We now describe the flow of play in the \emph{quantifier} stage.
Without loss of generality, 
consider an existentially quantified variable~$v_i$. Black Grasshopper~1 comes in from outside to land on space~\textcolor{nicegreen}{1}. This frees the white Grasshopper~2, which can only jump away to space~\textcolor{nicegreen}{2} (the alternative is selfmate). The central black Spider $S$ is now freed, as desired, but the white Grasshopper~2 is currently threatening to surround a black Queen. Since White's last move freed black Grasshopper~3, Black moves it to space~\textcolor{nicegreen}{3}, locking White's Grasshopper and preventing the win. This then frees white Spider~3, which moves to space~\textcolor{nicegreen}{4} (elsewhere is selfmate), simultaneously locking Grasshopper~3 under the One Hive rule.  Now it is Black's turn, with the black Spider~$S$ free to move. 
Note that white Spider~3 will never return to its previous space in the future, because black Grasshopper~3 would then become able to surround a white Queen.

Every piece is now locked down except $S$,
whose movement will now determine the truth-value of~$v_i$.
This includes white Grasshoppers~6 and~7, which are part of another structure. Later, after the \emph{quantifier,} \emph{clause chooser,} and \emph{literal chooser} stages have been completed, it is possible that white Grasshopper~6 or~7 will become unlocked and enter the present gadget in the \emph{tester} stage. As we will see in Claim~\ref{claim: tester grasshoppers}, this will occur if and only if the literal~$l$ chosen during the \emph{literal chooser} stage is either~$v_i$ or~$\neg v_i$, respectively.

Since the black Spider~$S$ is the only 
currently 
movable piece, the $\exists$-player chooses either to move~$S$ down (assigning~$v_i$ {\tt true}) or up (assigning it {\tt false}). Without loss of generality, suppose it moves down. This frees white Spider 1, which has no choice but to move to space~\textcolor{nicegreen}{5} (the alternative is selfmate). This then frees black Grasshopper~4.

We claim now that Grasshopper~4 can only jump away to the next structure. If it jumps left instead, Black loses. To see this, recall that the special dead piece~\textbf{DS} consists of a white Beetle on top of
three black Beetles, with a black Queen underneath. This tower of pieces has two roles. In the first place, naturally, it serves as a dead piece. This is because the top white Beetle can make no valuable move in one turn, and if it ever would move, the black Beetle under it would be able to climb on top of it, leaving the next black Beetle free to find a white Queen in the next few moves to surround without any recourse. In the second place, the tower ensures that black Grasshopper~4 cannot jump over white Spider~1, lest the top white Beetle suddenly step left and surround the black Queen inside the tower. So the Grasshopper must move southwest to the next structure, unlocking whatever is there.

Having described the flow of play in the \emph{quantifier} stage, we skip the \emph{clause chooser} and \emph{literal chooser} stages for the time being and proceed directly to the \emph{tester} stage. Later we discuss the intermediate stages.

\begin{lemma}\label{lem: first quantifier}
If a literal~$l \in 
\{v_i,\neg v_i\}$ is the one chosen during the \emph{literal chooser} stage, the $\exists$-player (Black) has a winning strategy in the \emph{tester} stage if and only if~$l$ is true under the variable assignment that was chosen during the \emph{quantifier} stage.
\end{lemma}

\begin{proof}
The proof of this lemma relies on the following claim. 
\begin{claim}\label{claim: tester grasshoppers}
In the \emph{tester} stage, a Grasshopper enters into the lower (respectively, upper) \emph{tester} entrance for a given $\exists$-quantified variable~$v_i$ if and only if~$l$ is~$v_i$ (respectively,~$\neg v_i$).
On the other hand, a~Grasshopper enters into the lower (respectively, upper) \emph{tester} entrance for a given $\forall$-quantified variable~$v_i$ if and only if~$l$ is~$\neg v_i$ (respectively,~$v_i$).
\end{claim}
We verify this claim later, in Section~\ref{subsec: completing}. For now, assume the claim is true and consider first the case in which variable~$v_i$ was quantified existentially. 
Suppose the chosen literal~$l$ is indeed either~$v_i$ or~$\neg v_i$, so white Grasshopper~6 or~7, respectively, has therefore entered the \emph{tester.} We now show that the $\exists$-player (Black) has a winning strategy if and only if~$l$
is true under the value assigned earlier to~$v_i$ in the \emph{quantifier} stage. There are four subcases to consider.
\begin{enumerate}
\item \label{case: QT1}
Suppose~$v_i$ was assigned {\tt true} at the \emph{quantifier} stage and the $\exists$-player chose~$l=v_i$ at the \emph{literal chooser} stage. So Spider~1 is on space~\textcolor{nicegreen}{5} and the black Grasshopper~4 is out of the picture.
As the \emph{tester} stage begins, white Grasshopper~6 enters the gadget on the indicated space (by Claim~\ref{claim: tester grasshoppers}), freeing
black Ant~1 to surround
the nearby 
white Queen by moving two hexes down. Thus the $\exists$-player has a winning strategy.

\item \label{case: QT2} Suppose $v_i$ was assigned {\tt true} at the \emph{quantifier} stage and the $\exists$-player chose $l = \neg v_1$ at the \emph{literal chooser} stage. In this case, Spider~1 and Grasshopper~4 are placed as in Case~\ref{case: QT1}, but white Grasshopper~7 enters the gadget rather than~6. When it does, no piece is freed. So Black has no way to stop White from winning on the next turn by jumping Grasshopper~7 northwest, or just punishing Black for moving one of the dead pieces. Thus the $\forall$-player has a winning strategy, as desired.
\item The case in which~$v_i$ was assigned {\tt false} and~$\neg v_i$ was chosen is symmetric to Case~\ref{case: QT1}.    
\item The case in which~$v_i$ was assigned {\tt false} and~$v_i$ was chosen is symmetric to Case~\ref{case: QT2}.
\end{enumerate}
We now consider universally quantified variables. Recall that here the colors of pieces are reversed in the gadget. So the Queens being surrounded in the four subcases are of the opposite color, reversing the outcomes. However, Claim~\ref{claim: tester grasshoppers} provides that the Grasshopper arriving in the \emph{tester} stage enters through the opposite entrance from before, reversing the outcomes again. So the four subcases again produce the correct winner.
\end{proof}

\subsection{Join} \label{subsec: join}
The \emph{join} gadget simulates an FGG node with indegree~2 and outdegree~1, joining two other structures together. We require these after each \emph{quantifier} gadget in order to combine the two outgoing Grasshopper paths into one, which then feeds into the next \emph{quantifier} (or, in the case of the final \emph{quantifier,} the first \emph{clause chooser} gadget). We also need \emph{join} gadgets when a literal appears in more than one clause, so that its multiple occurrences can be combined and fed into the appropriate \emph{tester}. This is because only one Grasshopper can enter a \emph{tester} gadget in the \emph{tester} stage, so if there are multiple Grasshopper representatives of the same literal coming from different clauses in~$\psi$, their paths must
be joined
beforehand.

Figure~\ref{fig: join} depicts the \emph{join} gadget. To start, either Grasshopper 1, coming from another structure northwest, or Grasshopper 2, coming in from southwest, lands on the corresponding space marked~\textcolor{nicegreen}{1}. This frees the black Grasshopper, which can only jump safely to space~\textcolor{nicegreen}{2} in the east. This then frees Grasshopper~3, which continues further east to the next structure. 

\begin{figure}[htbp]
\centering
\includegraphics[scale=0.335,draft=false]{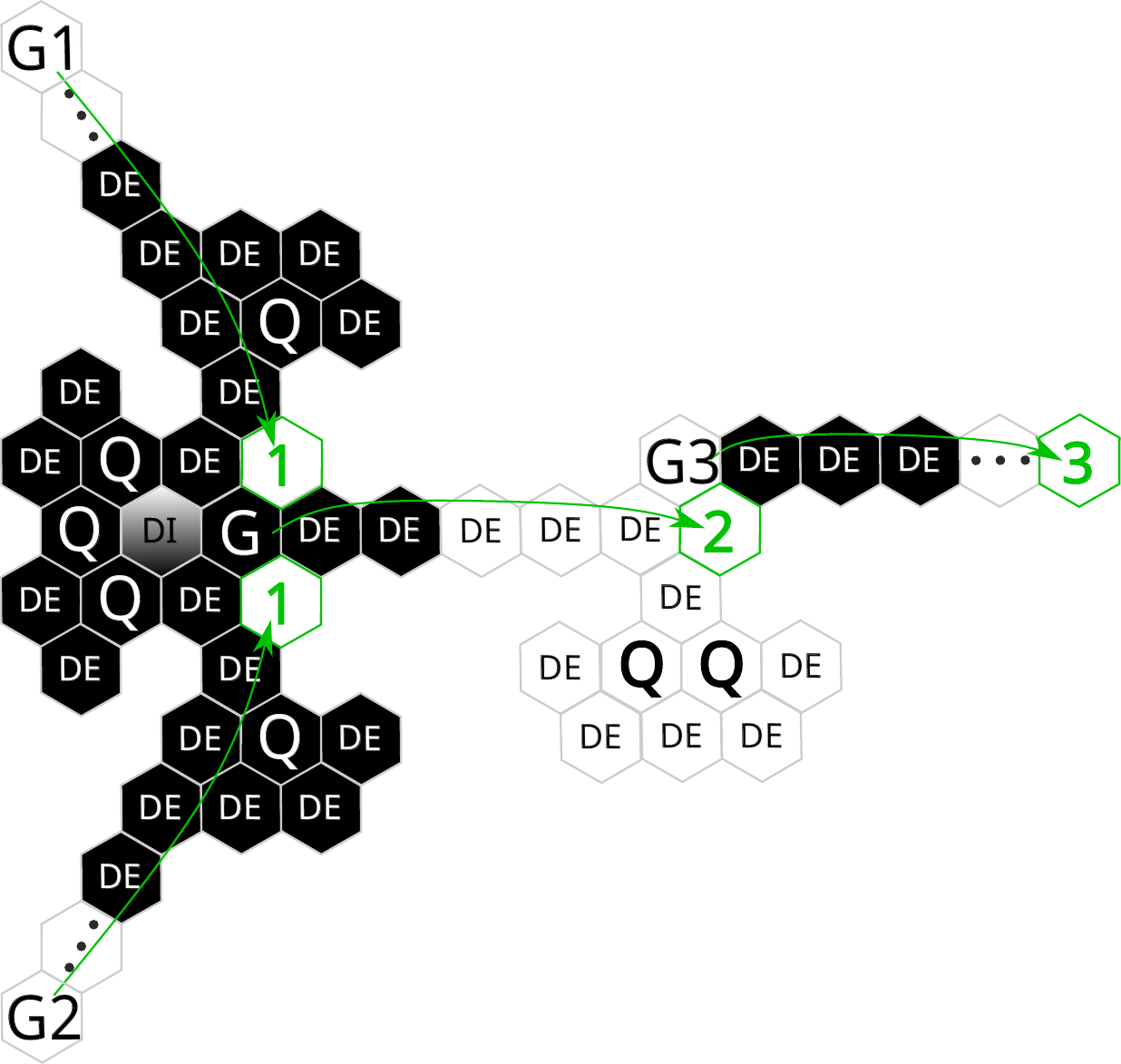}
\caption{A white-to-black \emph{join} gadget. (Black-to-white \emph{joins} are identical with colors reversed.)}
\label{fig: join}
\end{figure}

\enlargethispage{-\baselineskip}
\subsection{Turn switcher} \label{subsec: turn switcher}
This gadget's only function is to switch turns.
We have it available for whenever 
a black Grasshopper
is exiting one structure while a white Grasshopper is needed to enter the next, or vice versa. 
This occurs, e.g., between two \emph{clause choosers} (see Section~\ref{subsec: clause/literal chooser}). 
Figure~\ref{fig: switch} depicts the black-to-white version.
Here, the incoming black Grasshopper lands on space~\textcolor{nicegreen}{1}. This frees the white Grasshopper, which has no safe choice but to continue the same direction. 

\begin{figure}[htbp]
\centering
\includegraphics[scale=0.335,draft=false]{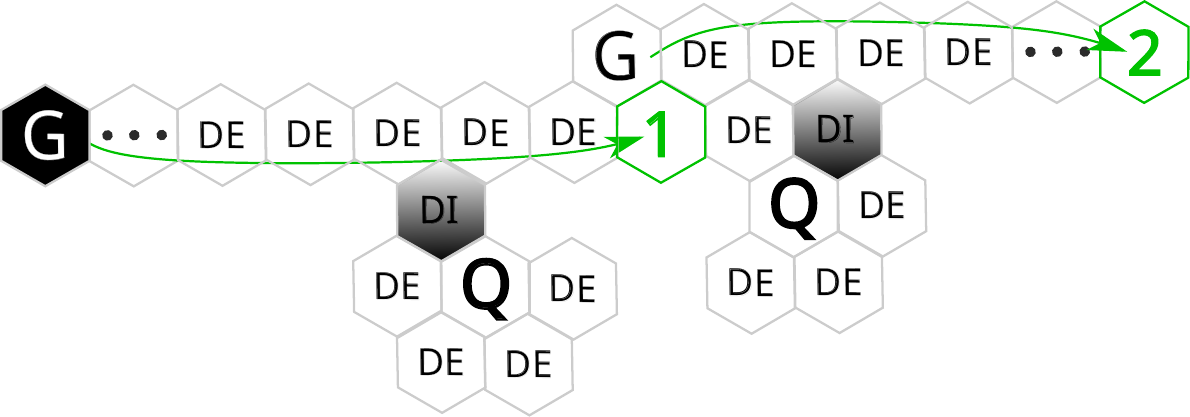}
\caption{A white-to-black \emph{turn switcher}. (The black-to-white are identical
with colors reversed.)}
\label{fig: switch}
\end{figure}

\subsection{Clause/literal chooser}\label{subsec: clause/literal chooser}
The \emph{clause chooser} gadget, shown in Figure~\ref{fig: chooser-changer}, functions as an FGG node with indegree~1 and outdegree~2. The black Grasshopper enters from the left and lands on space~\textcolor{nicegreen}{1}. This frees the white Grasshopper, which can jump to one of the two spaces marked~\textcolor{nicegreen}{2}. This represents White's choice between two clauses. 
The \emph{literal chooser} gadget, constructed the same with colors reversed, 
represents the similar choice of the $\exists$-player (Black) between two literals.

Chaining several \emph{clause choosers} together in the manner described in Theorem~\ref{thm: gg final version}, with \emph{turn switchers} in between to change back colors, effectuates giving White a choice over many clauses. Likewise for Black and literals in the next stage. If formula~$\psi$ has  $q \in \mathbb N$ clauses and $r \in \mathbb N $ literals, the reduction requires $q-1$ \emph{clause chooser} and $r-1$ \emph{literal chooser} gadgets.

\begin{figure}[htbp]
\centering
\includegraphics[scale=0.335,draft=false]{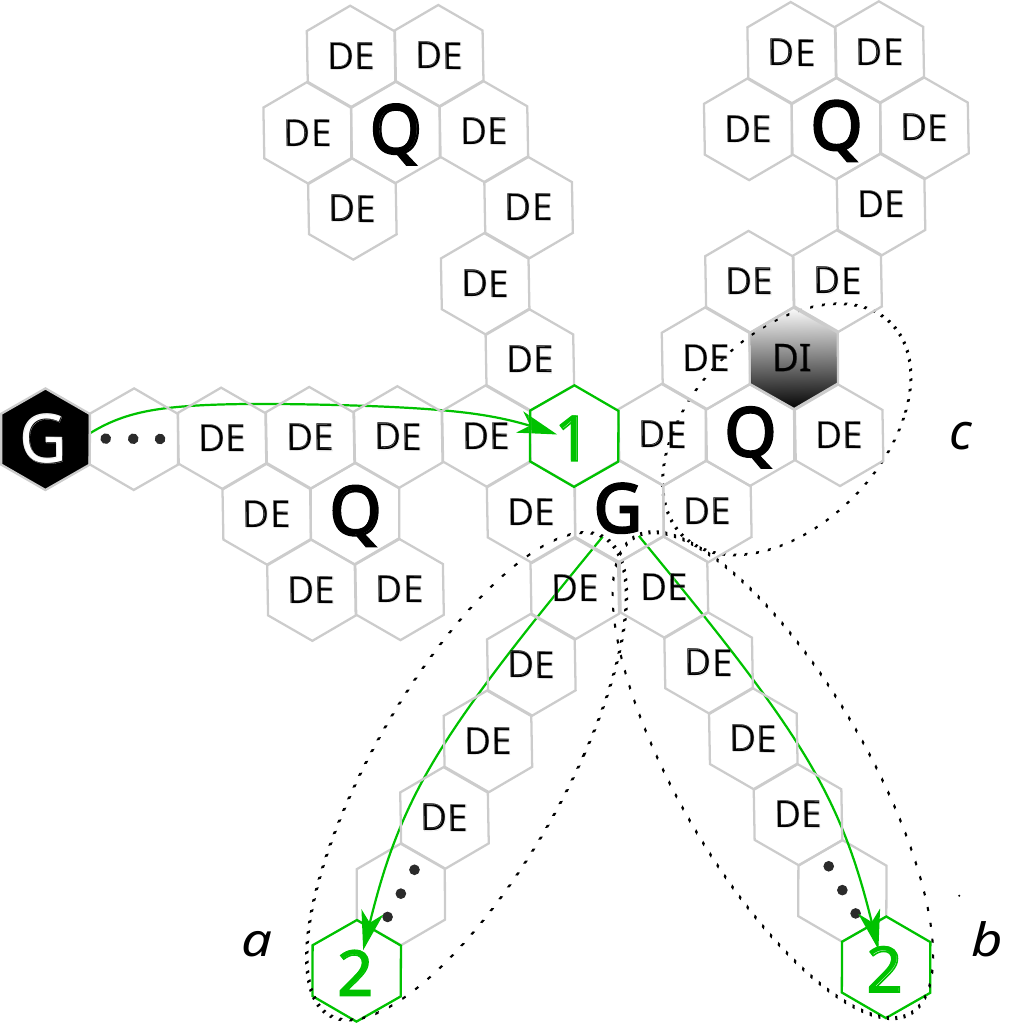}
\caption{Combined diagram for \emph{clause chooser, literal chooser, and direction changer} gadgets. The four pieces in group~$c$ are not needed in a \emph{clause chooser.} Reversing all piece colors of a \emph{clause chooser} yields a \emph{literal chooser.} Removing group~$a$ from the diagram yields a black-to-white~$60^\circ$ clockwise \emph{direction changer}. For this, group~$c$ is necessary because its absence would make the white Grasshopper's jump illegal under the One Hive rule. Finally, a black-to-white~$120^\circ$ clockwise \emph{direction changer} is constructed as a \emph{clause chooser} without group~$b$ (here, again,~$c$ is not needed).} 
\label{fig: chooser-changer}
\end{figure}

\enlargethispage{-\baselineskip}
\subsection{Direction changers} 
These gadgets are functionally \emph{turn switchers} that change the direction of the game's flow of play by~60 or~120 degrees. Figure~\ref{fig: chooser-changer} shows either a~$60^\circ$ or~$120^\circ$ black-to-white clockwise \emph{direction changer,} depending on which piece groups are included. Piece colors can naturally 
be reversed to produce white-to-black \emph{direction changers.} Play proceeds as in the \emph{clause} and \emph{literal choosers,} except with no choice to make for the outgoing Grasshopper's direction.

\emph{Direction changers} are needed in many places, such as between the \emph{quantifier} gadgets, whose outgoing Grasshoppers always move northwest or southwest and must be redirected so they can enter a \emph{join} (see Figure~\ref{fig: Quantification-Sketch}).

\begin{figure}[htbp]
\centering
\includegraphics[width=\textwidth,draft=false]{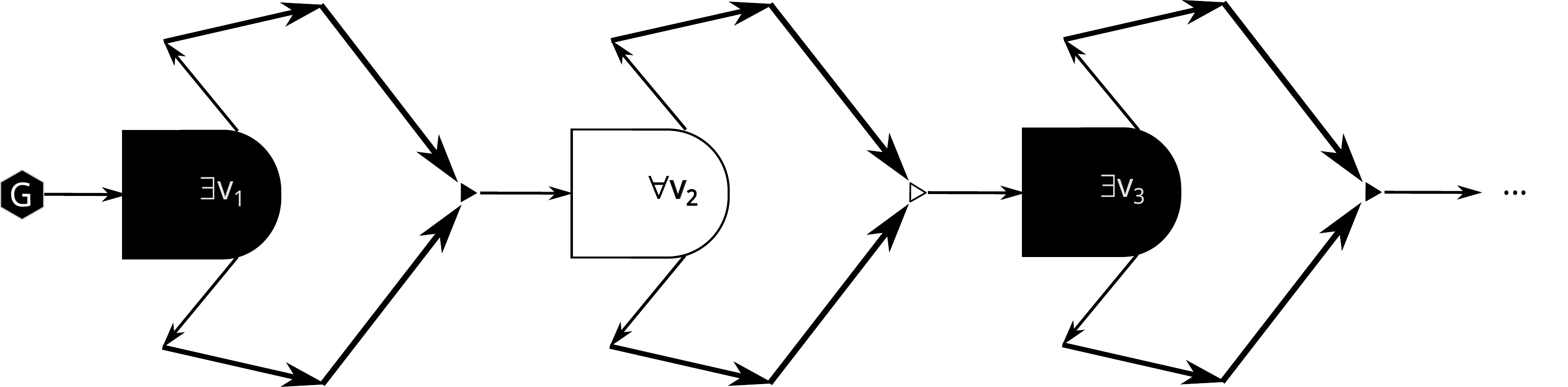}
\caption{Sketch of \emph{quantifier} gadgets connected together.~$G$ represents the starting Grasshopper. The arrows directed northwest and southwest out of the \emph{quantifiers} represent the chains of dead pieces over which the outgoing Grasshoppers may jump. The points where one arrow touches another contain~$60^\circ$ or~$120^\circ$ \emph{direction changers}. The small triangles between \emph{quantifiers} denote \emph{join} gadgets.} 
\label{fig: Quantification-Sketch}
\end{figure}

\subsection{Gap} \label{subsec: gap}
\enlargethispage{\baselineskip}
To ensure that all pieces locked by the One Hive rule are truly locked, it is necessary at some places to disconnect two structures by including 
a \emph{gap.} There are three types of places where we need a \emph{gap.} First, a \emph{gap} is required in one of the two paths between two \emph{quantifier} gadgets, to ensure that the Hive is connected in only one place. Second, we need a \emph{gap} wherever a \emph{literal chooser} (or a \emph{join} that has connected two \emph{literal choosers}) would meet a \emph{tester} gadget. Third, a \emph{gap} is employed whenever a crossing occurs. The first two uses of \emph{gaps} are discussed in Section~\ref{subsec: remarks one Hive rule}. The third is treated in Section~\ref{subsec: crossings}.

In Figure~\ref{fig: gap}, white Grasshopper 1 comes in from the previous structure to land on space~\textcolor{nicegreen}{1}. This frees the black Spider, which can pass over the gap by following the arrow and move to space~\textcolor{nicegreen}{2} (the alternative is selfmate). This move frees white Grasshopper 2, which jumps away to the next structure (moving elsewhere loses; note that jumping over the Spider would land on a space surrounding a~\textbf{DE} beetle tower that contains a white Queen).

\begin{figure}[htbp]
\centering
\includegraphics[scale=0.35,draft=false]{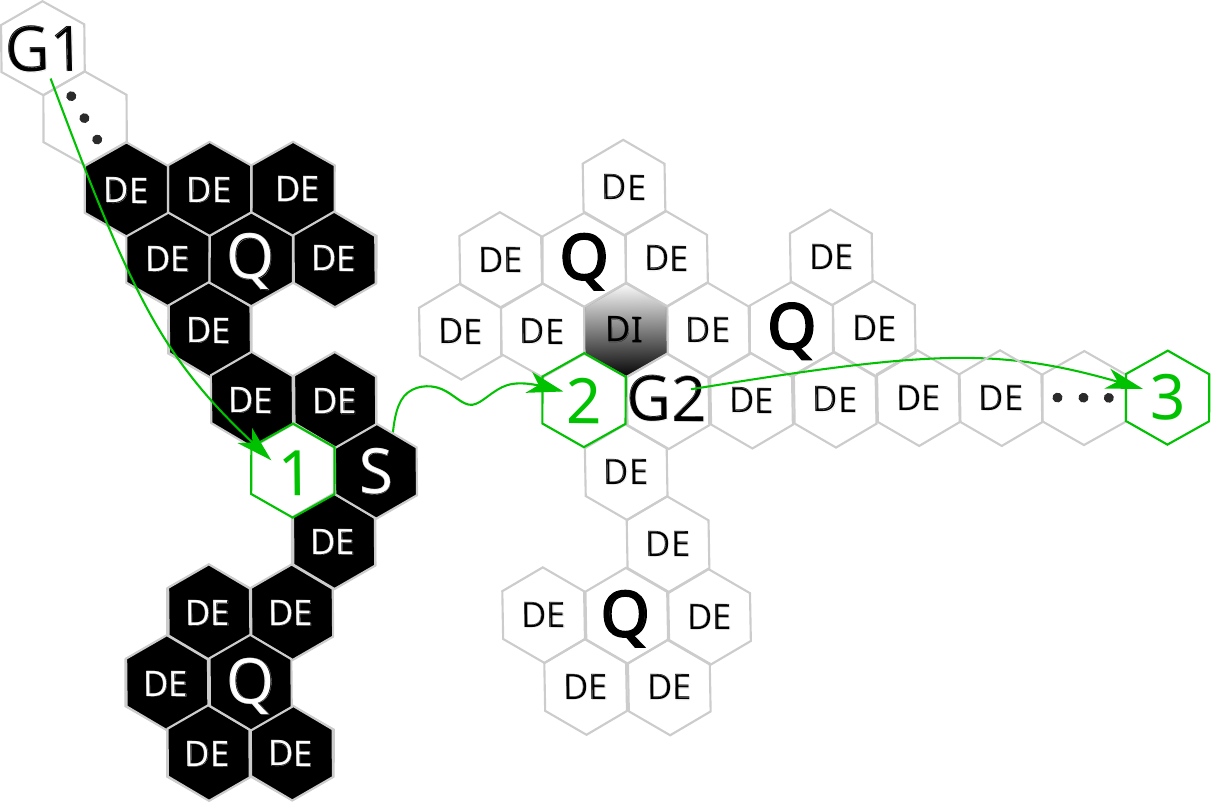}
\caption{A white \emph{gap} gadget to disconnect parts of
the Hive. Colors may be reversed.}
\label{fig: gap}
\end{figure}

\subsection{Connecting the gadgets} \label{subsec: remarks one Hive rule}
In this section, we show how the gadgets are connected to simulate an FGG graph, while ensuring that every gadget is linked to every other gadget by exactly one path of adjacent pieces. There must be at least one path because of the One Hive rule, and there must be at most one path to make sure pieces purposely locked by the One Hive rule are indeed locked.

When the gadgets are connected, the full board configuration of the FGG simulation can be seen as beginning with a long horizontal line of \emph{quantifier/tester} gadgets -- see Figure~\ref{fig: Quantification-Sketch}. Play flows from west to east. A structure connects to another through a single chain of dead pieces over which an outgoing Grasshopper may jump. Each of the \emph{quantifier} gadgets is linked to \emph{direction changers} that eventually feed into a \emph{join}. A \emph{gap} is also included among the \emph{direction changers} in the lower of the two paths coming from each \emph{quantifier}.
The \emph{join} connects to the next \emph{quantifier} gadget, except for the \emph{join} coming from the rightmost one. This last \emph{join} sends its output Grasshopper into a tree of further structures that extend above and below the horizontal line. The tree consists of the \emph{clauser chooser} gadgets, arranged as described in Section~\ref{subsec: clause/literal chooser} (based on the construction from Theorem~\ref{thm: gg final version}), which then further branch into \emph{literal chooser} gadgets that are themselves
arranged the same way. That is, the \emph{clause choosers} together form a chain of binary choices that collectively simulate a single choice of one clause among many, each of which in turn branches into a chain of literal choosers that do the same for literals in that clause.
So, for each combination of clause~$c_i$ and literal~$l_j \in c_i$, there is one path. At the end of each of these paths, a \emph{gap} is inserted.

\enlargethispage{-.7\baselineskip}
The leaves coming out of the \emph{literal choosers} feed into paths heading back west via \emph{direction changers}, each leading toward the appropriate \emph{quantifier/tester} for the literal in question.
Paths from \emph{literal choosers} corresponding to the same literal in different clauses are connected through \emph{join} gadgets before reaching the destination \emph{tester}.
The set of \emph{join} gadgets that feed into any given \emph{tester} entrance form a sort of branching tree themselves, if viewed in the direction opposite to the flow of play.

When connecting the path from a literal~$l$ to the corresponding \emph{tester}, if~$l \in \{v_i, \neg v_i\}$ for~$i \in \{1, 3, 5, \dots, n\}$ (recall that this means the variable~$v_i$ is existentially quantified), we connect it via the lower (true) Grasshopper entrance if and only if $l = v_i$. If~$l \in \{v_i, \neg v_i\}$ for~$i \in \{2, 4, 6, \dots, n-1\}$ (that is, the variable is universally quantified), we do the reverse, connecting instead via the lower (true) entrance if and only if $l = \neg v_i$. This is to ensure that Claim~\ref{claim: tester grasshoppers} holds.

With the construction described in this section, we see that gadgets are connected in such a way as to satisfy the One Hive rule at all times, while ensuring that pieces intended to be locked down by the rule remain duly locked. The only consideration not yet covered is the possible intersection of two paths, which we treat in Section~\ref{subsec: crossings}. 

\subsection{Crossings} \label{subsec: crossings}
Recall from fn.~\ref{fn: planarity} that we do not assume planarity of the input GG graph for our reduction. In this section, we present a way to deal with crossings in Hive directly.

When enough space is taken between the gadgets, no crossings occur in the \emph{clause chooser} and \emph{literal chooser} areas. In the area after the incorporated \emph{gaps}, however, when the literals from the clauses are directed to the corresponding \emph{testers}, crossings may occur for some formulas. Recall that some of the paths out of \emph{literal choosers} combine through \emph{joins} before they are directed to the corresponding \emph{tester}. In this process, too, crossings might occur. 

To ensure that these crossings cause no problems, first we describe how the crossings themselves are implemented. Then, we show why they 
do not 
free any pieces we need to keep locked under the One Hive rule.

The crossing itself is easily constructed. At the point of intersection, each of the two paths is a chain of dead pieces over which a Grasshopper can jump from one gadget into the next. So, to
implement the crossing we can simply allow the two chains to cross each other, with a single piece at the intersection serving as a common element to both chains. The Grasshoppers of both chains are still able to jump over their respective chains in the same manner as without a crossing.

To ensure pieces in the Hive locked by the One Hive rule remain locked, we simply add a \emph{gap} for each crossing, placed in one of the crossing paths
between the point of intersection 
and
the
\emph{tester} gadget. We now argue that this approach suffices.

\begin{lemma}\label{lem: crossing lemma}
Even with crossings, all gadgets still connect via exactly one path.
\end{lemma}

\begin{proof}
Let~$A$ be a path to a \emph{tester}~$t_a$ from one of the incorporated \emph{gaps.} Call this \emph{gap}~$g_a$. Let~$B$ be a path to a \emph{tester}~$t_b$ from another \emph{gap}~$g_b$. Note that all \emph{testers} are connected along the horizontal line of \emph{quantifier/tester} gadgets, so they are a single structure~$S$ for the purposes of the One Hive rule.

Suppose~$A$ and~$B$ cross. Call the shared hex~$c$. By our construction, one of the paths (without loss, path~$A$) is constructed as normal between the crossing point~$c$ and the \emph{tester}~$t_a$, while the other,~$B$, has an additional \emph{gap} between~$c$ and~$t_b$. Were it not for this added \emph{gap,} the reduction would break, as many pieces throughout the Hive would then be unhampered by the One Hive rule -- namely, those between~$c$ and~$t_a$, those
between~$c$ and~$t_b$, and a large number of pieces within 
gadgets between~$t_a$ and~$t_b$ (inclusive) in the horizontal line. This is because anything connected to the Hive through these pieces would have a secondary path of connection on top of the regular one, thanks to the connection between~$t_a$~and~$t_b$ through~$c$.

However, because of the added \emph{gap} in~$B$ between~$c$ and~$t_b$, we find the following. First, each of the pieces in the fragment of~$B$ running between this gap and~$t_b$ has only one path connecting it to the Hive -- namely, via the fragment itself. Second, each piece in~$B$ between~$c$ and the gap has only one path as well -- namely, via the fragment of~$A$ from~$c$ to~$t_a$. The latter fragment is also the only connecting path to~$S$ for pieces between~$g_a$ and~$c$ and for pieces between~$g_b$ and~$c$. All other gadgets are connected to the Hive as though~$A$ and~$B$ never crossed.

Since all gadgets connect by exactly one path, the lemma is proved.
\end{proof}

\subsection{Completing the proof} \label{subsec: completing}
Having laid out the reduction, we now make a few final observations.
First, we are now finally equipped to verify Claim~\ref{claim: tester grasshoppers}.
By inspection of the constructed position as a whole, we see that a Grasshopper indeed eventually enters an existential \emph{quantifier/tester} gadget on its northern (respectively, southern) side if and only if the corresponding literal~$\neg v_i$ (respectively,~$v_i$) is the literal~$l$ chosen in the \emph{literal chooser} stage. We similarly see the reverse for the universal case.

Next, we observe that the starting position of our FGG simulation is legally reachable from the starting state of the game. Recall that Hive always begins with an empty playing field. Our players will take turns placing pieces and moving them, starting with the first \emph{quantifier} and working their way through all the gadgets. Many pieces can be placed exactly where they need to end up in the position and never moved. Others can be placed nearby and then moved as needed. Given the pieces' versatility, it is easy to see how the starting position of the simulation can be reached. 
The Beetles are useful pieces to assist another piece in reaching its spot (for instance, by temporarily making an extra connection to free a piece locked down by the One Hive rule), as they can reach every possible space and are always able to return to their designated location. Any excess pieces in the players' supply are placed on the board and moved into a chain, attached to a gadget lying on the border of the Hive in such a way that it does not influence the game, with a dead exterior piece at the end of the chain.

Finally, we note that the size of the construction is polynomial relative to the size of the FGG graph. This is straightforward, as each gadget consists of a constant number of pieces, and each quantifier, clause, and literal requires a limited number of gadgets to be simulated.

This concludes the proof of the main theorem.\end{proof}

\section{Conclusion} \label{sec: conclusion}
We've shown that the winning strategy decision problem for Hive generalized to arbitrary numbers of pieces is PSPACE-hard by reduction from {\sc Formula Game}, with the aid of {\sc Formula Game Geography}, a variant of {\sc Generalized Geography} played 
on
graphs that simulate {\sc Formula Game} instances and have maximum indegree~2, outdegree~2, and total degree~3. Our proof has followed the broad reduction strategy of~\cite{go}, but with peculiarities owing to the hexagonal geometry of Hive and unique challenges imposed by the One Hive rule. At the same time, the One Hive rule has also been a great advantage in constructing the gadgets, by enabling us to prevent the majority of pieces from moving and control the flow of the game.

The obvious recommendation for future research is to determine the exact upper and lower bounds of {\sc Hive}'s complexity. As mentioned in Section~\ref{sec:introduction}, Hive has no correlate of the 50-move rule in chess; in fact, outside of the simultaneous surrounding of Queens,  Hive has no official way for games to end in a draw other than by
player agreement. Accordingly, we have conjectured that {\sc Hive} is EXPTIME-complete, a proof of which we hope will be found in future work.

Another recommendation, as mentioned in Section~\ref{sec:introduction}, is to consider a generalization of Hive in which the players have only one Queen each. This would align the generalized game more closely to the spirit of the original game, much as~\cite{FraenkelLichtenstein} and~\cite{storer} maintain one King per player in generalized chess. Our reduction makes essential use of the presence of multiple Queens by employing them to create selfmate hexes, which we use to prevent pieces
from traveling in certain directions. Without this resource, an alternative may be to create ``indirect'' selfmate hexes by placing Ants in such a way that they are locked down by the One Hive rule until an enemy piece moves to the critical hex (what would have been a selfmate hex in our current reduction). That Ant would then become free to travel any distance and, if the construction is successful, reach the enemy Queen to surround it. Another approach is to make use of repeated game-winning threats to force play, as in~\cite{go}, rather than relying on the One Hive rule and selfmate hexes.

Along similar lines, as noted in fn.~\ref{fn: all Queens}, even the assumption of multiple Queens allows for more than one possible generalization of the standard game. We expect that a variant of $n$-Hive in which a player must surround all enemy Queens to win, rather than just one, is at least as hard as other variants. However, the gadget constructions employed in the present paper are not suited to show this, given their reliance on selfmate hexes.

Finally, we mention that as an addition to the basic game, the publishers of Hive have released three other pieces in expansions between 2007 and 2013: the Ladybug, Mosquito, and Pillbug. 
In principle it could be investigated whether the inclusion of these 
additional pieces brings any increase to the game's computational complexity. 
We conjecture that it does not, as the complexity of Hive appears to lie to a greater extent in the geometry of the game and its abstract rules, such as the One Hive rule, rather than the exact movement of the pieces. Nevertheless, it cannot currently be ruled out.

\enlargethispage{-\baselineskip}
More interestingly, perhaps, we wonder whether a hardness proof can be carried out for Hive with a \emph{smaller} piece set than the standard one. While some pieces seem indispensable -- for example, Beetles -- the Ants are seldom used and appear replaceable in our reduction, and we further ask whether a version of Hive without access to Spiders might also prove still to be computationally hard.

\bibliographystyle{plain}


\end{document}